\title{A note on clique-width and tree-width for structures}
\author{Hans Adler and Isolde Adler}
\date{1st June 2008}
\keywords{Clique-width, tree-width, path-width}
\newcommand{\mM}{\mathcal A}
\newcommand{\ar}{\operatorname{ar}}
\newcommand{\tw}{\operatorname{tw}}
\newcommand{\w}{\operatorname{w}}
\newcommand{\pw}{\operatorname{pw}}
\newcommand{\ucw}{\operatorname{ucw}}
\newcommand{\ucwf}{\operatorname{ucwf}}
\newcommand{\UCW}{\textup{UCW}}
\newcommand{\UCWF}{\textup{UCWF}}
\renewcommand{\card}[1]{\left|#1\right|}
\renewcommand{\restrict}{\upharpoonright}
\newcommand{\bigmid}{\;\big|\;}
\begin{document}

\maketitle

\begin{abstract}
  We give a simple proof that the straightforward generalisation of clique-width to arbitrary
  structures can be unbounded on structures of bounded tree-width.
  This can be corrected by allowing fusion of elements.
\end{abstract}

\section{Introduction}
Clique-width, introduced by Courcelle and Olariu, is a good measure for the complexity of a graph in the sense that many problems that are intractable in general become tractable when restricted to graphs of bounded clique-width. Moreover, clique-width of a graph is bounded by a function of tree-width, while the converse is not 
true \cite{CO2000}. However, the clique-width of a clique is two,
so trying to measure the complexity of a general structure by the clique-width of its Gaifman graph makes no sense. Therefore we are looking for a notion similar to clique-width of graphs, which should ideally have all of the following properties:

\begin{enumerate}
\item It is defined for arbitrary structures.
\item It specialises (essentially) to clique-width, in the case of
graphs.
\item It is bounded by a function of the tree-width of the Gaifman
graph.
\item Computationally hard problems should become tractable on
	instances of bounded width.
\item It does not increase on induced substructures.
\item The value mapping is an MS transduction.
\item Every set of structures that is the image of an MS transduction from trees
	has bounded width.
\item For fixed $k$ a decomposition of width $k$ (or width~$f(k)$) can be computed in
polynomial time.
\end{enumerate}

This paper mainly addresses the third criterion. 
We consider the generalisation of clique-width to structures as 
proposed by Grohe and Tur\'an~\cite{GH2004}.
(A similar notion was introduced by Fischer and Makowsky, 
resulting in values that are smaller by one because elements can also be 
uncoloured~\cite{FM2004}.)
We give a simpler proof for a result of Courcelle, Engelfriet and
Rozenberg~\cite[Theorem 7.5]{CER1993}, showing that the clique-width of a class of
structures is not bounded in terms of its tree-width.
More specifically we show that there is a class of structures, with one ternary
relation symbol,
that has unbounded clique-width while the tree-width of the class is bounded
by~$2$. It is known that there can be no such examples with structures that have only
binary and unary relation symbols \cite{CO2000, Scheffzik}.

We then consider a modified definition where we additionaly allow fusion of elements.
We show that the corresponding modified width is bounded from above by 
tree-width plus~$2$. In an earlier version of this paper we referred to
clique-width with fusion as `reduced clique-width', since we were
not aware that the fusion operation had already been introduced by
Courcelle and Makowsky~\cite{CM2002}. We have now corrected this to avoid
unnecessary proliferation of technical terms.

\section{Structures, decompositions, and unary clique-width}
A \emph{signature} $\sigma=\{R_1,\ldots, R_n\}$ 
is a finite set of \emph{relation symbols} $R_i$, $1\le i\le n$.
As usual, every relation symbol $R\in\sigma$ has an associated \emph{arity} $\ar(R)$.
A \emph{$\sigma$-structure} is a tuple 
$\mM=(A,R^{\mM}_1,\ldots, R^{\mM}_n)$ where $A$ is
a set, the \emph{universe} of $\mM$, 
and $R^{\mM}_i\subseteq A^{\ar(R_i)}$ for $1\leq i\leq n$.
All structures in this paper are finite (i.e.\ they have finite universes).

Given a structure $\mathcal A$, we write $G_{\mathcal A}$ for the
\emph{underlying graph} (also called \emph{Gaifman graph}) of 
$\mathcal A$: The vertices are
the elements of the universe, and two different vertices are
joined by an edge if, and only if, they appear together in
some tuple that is in a relation of $\mathcal A$.

A \emph{tree decomposition} of a graph $G=(V,E)$ is a pair 
$(T,B)$, consisting of a rooted
tree 
$T$ and a family $B=(B_t)_{t \in T}$ of subsets of $V$, the 
\emph{pieces} of $T$, satisfying:
	\begin{itemize}

	\item
	For each $v \in V$ there exists $t \in T$, such that $v \in B_t$.

	\item
	For each edge $e \in E$ there exists $t \in T$, such that $e \subseteq B_t$.

	\item For each $v \in V$ the set
	$\{t \in T \mid v \in B_t \}$ is connected in $T$.
\end{itemize}

\medskip \noindent The \emph{width} of $(T,B)$ is defined as
$\w(T,B):= 
\max\big\{\left|B_t\right|-1\ \bigmid t\in T\big\}.$
The \emph{tree-width of $G$} is defined as
$
	\tw(G):= 
	\min\big\{\w(T,B)\ \bigmid (T,B) \text{ is a tree decomposition of }G\big\}.
$

\emph{Path decompositions} and \emph{path-width} of $G$, $\pw(G)$, are 
defined analogously, 
with the additional restriction that $T$ be a path.




Let us call a pair $(\mathcal A,\gamma)$ consisting of a structure $\mathcal A$ and a map $\gamma: A\to\omega$ a \emph{coloured structure}. It is $k$-\emph{coloured} if $\gamma(A)\subseteq\{0,1,\dots,k-1\}$. We will usually think of elements of
colour 0 as `uncoloured', and identify structures and 1-coloured structures.
We call an element $a$ of a coloured structure $(\mathcal A,\gamma)$
\emph{isolated} if $a$ is the only element of colour $\gamma(a)$.
For a signature $\sigma$ and a non-negative integer $k$, we define $\UCW_k[\sigma]$ as the smallest class of $k$-coloured $\sigma$-structures such that:

\begin{enumerate}
\item Every ($1$-coloured) empty $\sigma$-structure is in $\UCW_k[\sigma]$.
\item Every ($1$-coloured) $1$-element $\sigma$-structure whose relations are all empty is in $\UCW_k[\sigma]$.
\item The disjoint union $(\mathcal A\sqcup\mathcal B,\gamma_A\sqcup \gamma_B)$ of two coloured structures $(\mathcal A,\gamma_A),(\mathcal B,\gamma_B)\in\UCW_k[\sigma]$ is again in $\UCW_k[\sigma]$.
\item If $(\mathcal A,\gamma)\in\UCW_k[\sigma]$ and $f:\{0,\dots,k-1\}\to\{0,\dots,k-1\}$ is any function, then $(\mathcal A,f\circ \gamma)\in\UCW_k[\sigma]$.
\item If $(\mathcal A,\gamma)\in\UCW_k[\sigma]$, $R\in\sigma$ is an $n$-ary relation symbol, and $c_0,\dots,c_{n-1}\in \{0,\dots,k-1\}$ is an $n$-tuple of colours, then for the structure $\mathcal B$ which is like $\mathcal A$ except that $\mathcal B\models R(a_0,\dots,a_{n-1})$ holds iff $\mathcal A\models R(a_0,\dots,a_{n-1})$ or $\gamma(a_i)=c_i$ for $i=0,\dots,n-1$, we have $(\mathcal B,\gamma)\in\UCW_k[\sigma]$.
\end{enumerate}

For a $\sigma$-structure $\mathcal A$ (possibly coloured) let $\ucw \mathcal A$, the \emph{unary clique-width of} 
$\mathcal A$, be the smallest number $k$ such that $\mathcal A\in\UCW_k[\sigma]$. `Unary' because only single 
elements are coloured.
It is easy to see that our notion is equivalent to that of Grohe and 
Tur\'an, even though they use a more restrictive colouring operation \cite{GH2004}.

Every $k$-coloured $\sigma$-structure $\mathcal A$ with $\card A\leq k$
elements has unary clique-width $\ucw\mathcal A\leq k$:
Take the disjoint union of $\card A$
differently coloured 1-element structures, introduce the necessary relations,
and recolour all elements with their final colour.

\section{Unary clique-width and tree-width}

\begin{proposition}\label{PropositionUcwLessThanPw}
	Every (1-coloured) structure $\mathcal A$ satisfies
	\[ \ucw(\mathcal A)\leq\pw(G_{\mathcal A})+2. \]
\end{proposition}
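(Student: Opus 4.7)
The plan is to scan a width-$k$ path decomposition $(P,B) = (B_1, \ldots, B_m)$ of $G_{\mathcal A}$ from left to right, using $k+2$ colours $\{0,1,\ldots,k+1\}$, and to maintain the invariant that at any moment the elements of the currently active bag $B_i$ carry pairwise distinct non-zero colours while every element already ``forgotten'' carries colour~$0$. Because $|B_i| \leq k+1$, the $k+1$ non-zero colours suffice for the active bag, and colour~$0$ accommodates every forgotten element.

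I would start by recording the standard clique property of tree and path decompositions: if $(a_1,\ldots,a_n)$ lies in some relation of $\mathcal A$ then $\{a_1,\ldots,a_n\}$ is a clique in $G_{\mathcal A}$, hence some bag $B_j$ contains all of them. This guarantees that every relation tuple can be inserted at the latest stage at which all its elements are simultaneously active.

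The main loop processes each transition $B_{i-1} \to B_i$ in three parts. \emph{Forget:} for each $a \in B_{i-1} \setminus B_i$ apply the recolouring sending $\gamma(a) \mapsto 0$ and fixing all other colours; because active elements have distinct colours, only $a$ is affected. \emph{Introduce:} for each $a \in B_i \setminus B_{i-1}$ in turn, choose a colour $c \in \{1,\ldots,k+1\}$ not currently used by an active element (such $c$ exists, since the active elements just before this introduction number at most $|B_i|-1 \leq k$); then first apply $0 \mapsto c$ to lift every forgotten element out of colour~$0$, next take a disjoint union with a fresh $1$-element $1$-coloured $\sigma$-structure (whose element arrives with colour~$0$), and finally apply the swap exchanging $0$ and $c$ and fixing all other colours. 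The net effect places the new element into colour $c$, returns the forgotten pool to $0$, and leaves the other active colours untouched. \emph{Add relations:} for each tuple $\bar a = (a_1,\ldots,a_n) \in R^{\mathcal A}$ with $\{a_1,\ldots,a_n\} \subseteq B_i$ not yet present, apply operation~5 with colour vector $\bigl(\gamma(a_1),\ldots,\gamma(a_n)\bigr)$. These colours are all non-zero, and active elements biject with the non-zero colours, so only the desired tuple $\bar a$ is added.

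After processing the last bag, a concluding recolouring sending every colour to~$0$ recovers the $1$-coloured $\mathcal A$, yielding $\ucw(\mathcal A) \leq k+2 = \pw(G_{\mathcal A}) + 2$. The subtlety is concentrated entirely in the introduce step: the recolouring operation acts uniformly on a whole colour class, so one cannot simply paint the newly added element with a chosen fresh colour without also disturbing the forgotten pool that it shares colour~$0$ with. The three-move swap trick resolves this by first vacating colour~$0$ into a free active slot, then adding the new element into~$0$, and finally exchanging the two colours in one stroke.
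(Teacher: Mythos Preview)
Your proof is correct and follows the same strategy as the paper: sweep the path decomposition from one end to the other, keep the current bag on pairwise distinct non-zero colours, and park all forgotten elements on colour~$0$. The paper packages this as an induction on the number of bags rather than as an explicit forget/introduce/add loop, but the invariant and the colour budget are identical.

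Your three-move swap in the introduce step is correct but unnecessary. The operations defining $\UCW_k[\sigma]$ compose freely, so you may recolour the fresh singleton \emph{before} taking the disjoint union: create the one-element structure (colour~$0$), apply rule~4 sending $0\mapsto c$ on that singleton alone, and only then form the disjoint union with the structure built so far. Since $c$ is chosen fresh among the active colours and is non-zero, nothing collides with either the active bag or the forgotten pool. This is exactly what the paper's phrase ``disjoint union with one-element structures'' is doing implicitly; your swap trick is a valid alternative, just longer.
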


\begin{proof}
	Let $k=\pw(G_{\mathcal A})+1$.
	We will show by induction on $n$ the stronger statement that if $G_{\mathcal A}$
	has a path decomposition $B_0,B_1,\dots,B_{n-1}$ of width $\leq k+1$
  and $\gamma\colon A\rightarrow\{0,\dots,k\}$
	is a $k+1$-colouring of $\mathcal A$ such that all elements of $A\setminus B_{n-1}$
	have colour 0, then $(\mathcal A,\gamma)\in\UCW_{k+1}[\sigma]$.
	The case $n=0$ is trivial because then $\mathcal A$ is the empty $\sigma$-structure.
	
	For $n\geq1$, let $\mathcal A'=\mathcal A\restrict(B_0\cup\dots\cup B_{n-2})$
	be the induced substructure	of $\mathcal A$ with domain $B_0\cup\dots\cup B_{n-2}$,
	and let $\gamma'\colon B_0\cup\dots\cup B_{n-2}\rightarrow\{0,\dots,k+1\}$
	be a colouring such that every element of $B_{n-2}\cap B_{n-1}$
	is isolated and has a non-zero colour, while all other elements have colour~0.
	Then $(\mathcal A',\gamma')\in\UCW_{k+1}[\sigma]$ by the induction hypothesis.
	Let $\mathcal A''$ be the $\sigma$-structure with domain $A$ whose relations
	are precisely those of $\mathcal A'$.
	Let $\gamma''\colon A\rightarrow\{0,\dots,k+1\}$ extend $\gamma'$ so that
	any two elements of $B_{n-1}$ have distinct non-zero colours.
	$(\mathcal A'',\gamma'')$ can be obtained from $(\mathcal A',\gamma')$ by
	disjoint union with one-element structures, so clearly
	$(\mathcal A'',\gamma'')\in\UCW_{k+1}[\sigma]$. All relations of $\mathcal A$
	that are not also relations of $\mathcal A'$ must be between elements of $B_{n-1}$.
	Since every element of $B_{n-1}$ has a unique colour we can introduce all these
	relations without introducing any unwanted relations.
	Similarly, for $f\colon\{0,\dots,k+1\}\rightarrow\{0,\dots,k+1\}$ such that
	$f(0)=0$ and $f(\gamma''(a))=\gamma(a)$ we get $\gamma=f\circ\gamma''$.
	Hence $(\mathcal A,\gamma)\in\UCW_{k+1}[\sigma]$.
\end{proof}

\begin{lemma}
	For every structure $\mathcal A$ in the signature $\sigma=\{E\}$
	of graphs, there is a structure $\mathcal A'$ with
	universe $A'=A\sqcup\{t\}$ in the signature $\sigma'=\{R\}$, $R$ a
	ternary relation symbol, which satisfies:
	\begin{enumerate}
	\item $\tw(G_{\mathcal A'})=\tw(G_{\mathcal A})+1$.
	\item $\pw(G_{\mathcal A'})=\pw(G_{\mathcal A})+1$.
	\item $\ucw(\mathcal A')\geq\pw(G_{\mathcal A'})+1$.
	\end{enumerate}
\end{lemma}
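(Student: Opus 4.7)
I would define $\mathcal A'$ by taking $A'=A\sqcup\{t\}$ and setting
\[ R^{\mathcal A'}=\{(t,a,b):\{a,b\}\in E^{\mathcal A}\}\cup\{(t,a,a):a\in A\}, \]
where the first set of tuples encodes the edges of $\mathcal A$ via~$t$ and the diagonal tuples $(t,a,a)$ make $t$ adjacent to every element of~$A$ in the Gaifman graph while also giving each $a$ a distinguishing tuple. Then $G_{\mathcal A'}$ is precisely $G_{\mathcal A}$ with $t$ added as a universal vertex, so parts~(1) and~(2) follow from the standard fact that adding a universal vertex to a graph increases both tree-width and path-width by exactly~$1$: insert $t$ into every bag of an optimal decomposition of $G_{\mathcal A}$, and conversely remove $t$ from the bags of one for $G_{\mathcal A'}$.

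For~(3) the essential feature of~$\mathcal A'$ is that every tuple of $R^{\mathcal A'}$ contains~$t$. I would first prove a \emph{linearisation} step: any UCW construction of $\mathcal A'$ in $k$ colours can be rearranged so that it grows from $\{t\}$ by successive single-element disjoint unions, interleaved with recolourings and relation-adding operations. The justification is that inside any subtree of the construction tree that does not contain~$t$, no relation-adding operation can actually introduce an $R$-tuple (since every such tuple contains~$t$), so such a subtree is effectively just a sequence of singletons with appropriate colours, which can be flattened and merged one at a time into the main chain containing~$t$.

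From the resulting linear construction I would read off an ordering $t,a_1,\dots,a_n$ of $A'$ and extract a path decomposition of $G_{\mathcal A'}$. Each bag $B_i$ would consist of~$t$ together with one representative per colour class at time~$i$ that still contains an active element (one appearing in some later relation-addition), giving $|B_i|\leq k$. Edge coverage is arranged by choosing the representatives so that, for each triple added at time~$j$ by a colour pattern $(c_t,c,c')$, the two $A$-elements it involves are the representatives of classes $c,c'$ at time~$j$; at that moment these elements must in fact occupy distinct colour classes, since otherwise the biclique added by the pattern would include tuples not in $R^{\mathcal A'}$.

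The main obstacle will be the size estimate and contiguity. The local rigidity provided by the diagonal tuples — each $(t,a,a)$ can only be introduced when $a$ is alone in its colour class — should force elements of~$A$ whose labels have not yet been added to occupy distinct colour classes; combined with the fact that UCW recolouring can only coarsen the partition of elements into colour classes, this ought to suffice to bound the bag size by~$k$. Carrying this reasoning out precisely, choosing the representatives consistently so that each vertex's bag occurrences form a contiguous interval, and possibly enriching the construction of $\mathcal A'$ with further asymmetric tuples to handle elements that are post-label yet still active, is where the bulk of the technical work will lie.
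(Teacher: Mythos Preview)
Your construction of $\mathcal A'$ is too weak for part~(3) to hold, and the gap is already visible on the smallest non-trivial example. Take $\mathcal A=K_2$ with vertices $a,b$. With your $R$, the structure $\mathcal A'$ on $\{t,a,b\}$ has exactly the four tuples $(t,a,a),(t,b,b),(t,a,b),(t,b,a)$, and it can be built with only two colours: create $t$ and recolour it to colour~$1$; add $a$ and $b$ as singletons of colour~$0$; then a single application of operation~5 with colour pattern $(1,0,0)$ introduces all four tuples at once. Hence $\ucw(\mathcal A')\le 2$, whereas $\pw(G_{\mathcal A'})+1=\pw(K_3)+1=3$. The specific claim that fails is that ``$(t,a,a)$ can only be introduced when $a$ is alone in its colour class''. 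Adding $(t,a,a)$ necessarily uses a pattern $(c_0,c_1,c_1)$, which also adds every tuple $(t,y,z)$ with $y,z$ of colour~$c_1$; all such tuples lie in $R^{\mathcal A'}$ precisely when the colour class of~$a$ is a clique of $G_{\mathcal A}$, not only when it is a singleton. So your diagonal tuples do not provide the rigidity your argument needs, and no amount of ``enriching $\mathcal A'$ with further asymmetric tuples'' of the shape $(t,\cdot,\cdot)$ will help, since a pattern $(c_0,c_1,c_2)$ with $c_1\ne c_2$ never produces a diagonal tuple at all.

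The paper repairs this by building the repetition into the tuple itself rather than into the colour pattern: it sets $R(a,c,b)$ to hold when $a\neq b$, $c\in\{a,b\}$, and either $Eab$ or $a=t$. Now any operation introducing a tuple of the form $R(a,a,b)$ must use a pattern $(c_0,c_0,c_2)$, and if the $c_0$-class contained a second element~$a'$ one would also add $R(a,a',b)$, which violates $a'\in\{a,b\}$; hence the doubled element is forced to be isolated. With this $R$ in hand the paper also bypasses your linearisation step: every node of the construction tree that contains~$t$ already lies on the branch from the root to the $t$-leaf, so one simply takes the isolated elements along that branch as the bags of a path decomposition, each of size at most~$k$.
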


\noindent (Note: The $+1$ in clauses 1 and 2 is due to the fact that 
$G_{\mathcal A'}$ is an apex graph over
$G_{\mathcal A}$, whereas the $+1$ in 3 merely corrects the 
conventional $-1$ in the definition of path-width.)

\begin{proof}
	We interpret the relation $R$ as follows:
	\[
		\mathcal A'\models Racb\quad\iff\quad a\not=b\text{ and } c\in\{a,b\} \text{ and }
			\big(\mathcal A\models Eab\text{ or }a=t\big).
	\]
	Then 1 and 2 hold because $G_{\mathcal A'}$ is an apex graph
	over $G_{\mathcal A}$.
	(I.e., $G_{\mathcal A}$ is an induced subgraph of $G_{\mathcal A'}$,
	and $G_{\mathcal A'}$ has exactly one additional vertex, say $t$, and $t$ has an edge
	with every vertex of $G_{\mathcal A}$.)
	Towards a proof of 3 we observe:
	
	If $\mathcal A'\in\UCW_k[\sigma']$, then there is a tree of $k$-coloured
	$\sigma'$-structures such that the leaves are singletons with empty relations,
	and every inner node is either binary and the disjoint union of its two children,
	or unary and obtained from its only child by recolouring or by introducing
  a new relation.
	Due to the definition of $R$, for every element $a\neq t$ there must be a node
	containing both $a$ and $t$ as isolated elements,
	and for any two distinct elements $a,b\neq t$ there must be a node containing
	$a,b,t$ as three isolated elements.

	The branch $(\mathcal A^t_0,\gamma^t_0),\dots,(\mathcal A^t_{m-1},\gamma^t_{m-1})$
	of the tree which begins with the root and ends in the leaf consisting of the single
	element $t$ has the following properties:
	\begin{enumerate}
	\item For every element $a\in A$ the branch contains a node that has $a$ and $t$
	as isolated elements.
	\item For every edge $\mathcal A\models Eab$ of $\mathcal A$
	there is a node that has $a,b,t$ as isolated elements.
	\item For every element $a\in A$ there is a greatest index $j$ such that
  $a\in A^t_j$, and a smallest index $i\leq j$ such that $a$ is the only element of
  colour $\gamma^t_i(a)$.
	\end{enumerate}
	Let $\ell$ be the greatest index such that $t$ is isolated in $\mathcal A^t_\ell$.
	For $i=0,\dots,\ell$ let $B_i$ consist of the isolated elements
	of $(\mathcal A^t_i,\gamma^t_i)$.
	Then $B_0\setminus\{t\},\dots,B_\ell\setminus\{t\}$ is clearly a path decomposition
	of $\mathcal A$. Each bag $B_i\setminus\{t\}$ has at most $k$ elements,
	so the width of the path decomposition is at most $k-1$.
	Thus we have shown that $\ucw\mathcal A'\leq k$ implies $\pw\mathcal A\leq k-2$.
\end{proof}

\begin{corollary}\label{CorollaryUnaryProblem}
	For every non-negative integer $n$ there is a structure $\mathcal A$ with only one,
	ternary, relation symbol, such that
	\begin{enumerate}
		\item $\tw(G_{\mathcal A})=2$ and
		\item $\ucw(\mathcal A)>n$.
	\end{enumerate}
\end{corollary}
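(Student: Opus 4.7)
The plan is to combine the preceding lemma with the well-known fact that trees can have arbitrarily large path-width. Given the lemma, the corollary reduces to finding, for each $n$, a graph $\mathcal{A}$ in the signature $\{E\}$ whose Gaifman graph is a tree (so $\tw(G_\mathcal{A})=1$) and whose path-width satisfies $\pw(G_\mathcal{A})\geq n-1$.

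For the existence of such graphs I would invoke the standard fact that the complete binary tree $T_d$ of depth $d$ has $\pw(T_d)=\lceil d/2\rceil$; for any $n$ one thus chooses $d$ with $\lceil d/2\rceil\geq n-1$ and lets $\mathcal{A}$ be $T_d$, regarded as a $\{E\}$-structure. If I wanted to avoid a citation, I would include a short direct argument that $\pw$ on complete binary trees grows with depth, e.g.\ by a pebbling/separator argument on $T_d$.

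Applying the lemma to this $\mathcal{A}$ produces a ternary $\sigma'=\{R\}$-structure $\mathcal{A}'$. Clause~1 of the lemma gives
\[
  \tw(G_{\mathcal{A}'})=\tw(G_\mathcal{A})+1=1+1=2,
\]
establishing item~1 of the corollary. Combining clauses~2 and~3 of the lemma yields
\[
  \ucw(\mathcal{A}')\geq\pw(G_{\mathcal{A}'})+1=\pw(G_\mathcal{A})+2\geq (n-1)+2 = n+1 > n,
\]
which is item~2. Hence $\mathcal{A}'$ witnesses the corollary.

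The only genuine step here is producing trees of unbounded path-width; everything else is direct bookkeeping with the lemma. Note that this simultaneously shows, via Proposition~\ref{PropositionUcwLessThanPw}, that the gap $\ucw(\mathcal{A}')-\tw(G_{\mathcal{A}'})$ is indeed forced to come from the path-width of the underlying tree rather than from any pathology in the construction.
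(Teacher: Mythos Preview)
Your argument is correct and follows the same route as the paper: pick a tree of large path-width, apply the lemma, and read off the two conclusions from clauses~1--3. The only difference is cosmetic---you make the choice of tree explicit (complete binary trees) and use the threshold $\pw\geq n-1$ instead of $\pw\geq n$, both of which are harmless refinements.
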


\begin{proof}
	Let $\mathcal G$ be an undirected tree in the signature $\{E\}$
	of graphs, satisfying $\pw(G_{\mathcal G})\geq n$.
	Let $\mathcal A'=\mathcal G'$ be as in the Lemma.
	Then $\tw(G_{\mathcal G'})=\tw(G_{\mathcal G})+1=2$ and
	$\ucw(\mathcal G')\geq\pw(G_{\mathcal G'})+1=\pw(G_{\mathcal G})+2>n$.
\end{proof}

As Johann Makowsky pointed out to us, a result
of Glikson and Makowsky \cite{GM2003} is cited incorrectly
in the paper by Fischer and Makowsky \cite{FM2004} (as Theorem 3.7 (ii)),
resulting in an apparent contradiction to Corollary \ref{CorollaryUnaryProblem}.
Our result is optimal in the sense that, as shown by
Courcelle and Olariu, clique-width of structures with at most binary relations
is bounded by a function of tree-width of the Gaifman graph~\cite{CO2000}.
(The details for an arbitrary number of binary relations were checked
by Till Scheffzik in his diplomarbeit~\cite{Scheffzik}.)

\section{Clique-width with fusion}

The reason why the proof of Proposition~\ref{PropositionUcwLessThanPw} works with path-width but not with tree-width is that for unary clique-width there is no way to glue together two substructures that intersect in a small bag of elements. In other words, while we can introduce new relations in the signature depending only on the colours of the elements, we cannot do this for equality, which can also be seen as a binary relation. The definition below fixes this, using the fusion operation
of Courcelle and Makowsky \cite{CM2002,Makowsky2004}.

We will need a rarely used universal construction for structures: The quotient by an equivalence relation which need not be a congruence relation. Let $\mathcal A$ be a
$\sigma$-structure, and let $\sim$ be an equivalence relation on its domain $A$.
Let $B=A/{\sim}$ be the set of equivalence classes of $\sim$. Let $\mathcal B$ be
the $\sigma$-structure with domain $B$ which satisfies $\mathcal B\models R(b_1,\dots,b_{n(R)})$
if and only if there are $a_1\in b_1,\dots,a_{n(R)}\in b_{n(R)}$ such that
$\mathcal A\models R(a_1,\dots,a_{n(R)})$. We denote this \emph{quotient structure} by
$\mathcal B=\mathcal A/{\sim}$. ($\mathcal B$ is universal in the sense that
the projection map $p: \mathcal A\to \mathcal B$ is a homomorphism, and for every other
structure $\mathcal B'$ with the same domain $B$, if the projection map
$p': \mathcal A\to\mathcal B'$ is also a homomorphism then it factors through $p$.)
For a coloured structure $(\mathcal A,\gamma)$ and a set $C\subseteq\omega$ of colours
we define $a\sim_Ca'$ to mean $a=a'$ or $\gamma(a)=\gamma(a')\in C$. The obvious colouring
$\gamma/{\sim_C}$ induced on the quotient structure $\mathcal A/{\sim_C}$ satisfies
$\gamma/{\sim_C}(b)=\gamma(a)$ for some/any $a\in b$.

For a signature $\sigma$ and a non-negative integer $k$, the class $\UCWF_k[\sigma]$ is the smallest class of $k$-coloured $\sigma$-structures such that:

\begin{enumerate}
\item Every empty $\sigma$-structure is in $\UCWF_k[\sigma]$.
\item Every $1$-element $\sigma$-structure whose relations are all empty is in $\UCWF_k[\sigma]$.
\item The disjoint union $(\mathcal A\sqcup\mathcal B,\gamma_A\sqcup \gamma_B)$ of two coloured structures $(\mathcal A,\gamma_A),(\mathcal B,\gamma_B)\in\UCWF_\sigma^k$ is again in $\UCWF_k[\sigma]$.
\item If $(\mathcal A,\gamma)\in\UCWF_k[\sigma]$ and $f:\{0,\dots,k-1\}\to\{0,\dots,k-1\}$ is any function, then $(\mathcal A,f\circ \gamma)\in\UCWF_k[\sigma]$.
\item If $(\mathcal A,\gamma)\in\UCWF_k[\sigma]$, $R\in\sigma$ is an $n$-ary relation symbol, and $c_0,\dots,c_{n-1}\in \{0,\dots,k-1\}$ is an $n$-tuple of colours, then for the structure $\mathcal B$ which is like $\mathcal A$ except that $\mathcal B\models R(a_0,\dots,a_{n-1})$ holds iff $\mathcal A\models R(a_0,\dots,a_{n-1})$ or $\gamma(a_i)=c_i$ for $i=0,\dots,n-1$, we have $(\mathcal B,\gamma)\in\UCWF_k[\sigma]$.
\item If $(\mathcal A,\gamma_A)\in\UCWF_\sigma^k$ and $c\in\{0,\dots,k-1\}$, then also
$(\mathcal A/{\sim_{\{c\}}},\gamma/{\sim_{\{c\}}})\in\UCWF_k[\sigma]$.
\end{enumerate}

The only difference to the definition of $\UCW_k[\sigma]$ is that we
have added the fusion operation at the end.
For a $\sigma$-structure $\mathcal A$ let $\ucwf \mathcal A$, the \emph{unary clique-width with fusions} of $\mathcal A$, be the smallest number $k$ such that $\mathcal A\in\UCWF_k[\sigma]$.

\begin{remark}
	Every $k$-coloured $\sigma$-structure $(\mathcal A,\gamma)$ with at most $k$
	elements satisfies $(\mathcal A,\gamma)\in\UCWF_k[\sigma]$.
\end{remark}

\begin{proof}
	Clearly for every $\sigma$-structure $\mathcal A'$ with no relations and every
	$k$-colouring $\gamma'$ of $\mathcal A'$ we have
  $(\mathcal A',\gamma')\in\UCWF_k[\sigma]$.
	Now for $\mathcal A$ as in the remark let $\mathcal A'$ be the
  $\sigma$-structure which
	has the same domain as $\mathcal A$ but no relations, and let $\gamma'$ be a colouring
	of $\mathcal A'$ such that every element has a different colour.
	Since $(\mathcal A',\gamma')\in\UCWF_k[\sigma]$, clearly also
	$(\mathcal A,\gamma)\in\UCWF_k[\sigma]$.
\end{proof}

\begin{lemma}
	Suppose $\mathcal A$ has a tree decomposition $(T,B)$ of width $\leq k$,
	and $B_t$ is the bag at some tree node $t\in T$. Let $\gamma$ be a $(k+2)$-colouring
	of $\mathcal A$ such that all elements $a\in A$ of the domain of $\mathcal A$
	have colour $\gamma(a)=0$.
	Then $(\mathcal A,\gamma)\in\UCWF_k[\sigma]$.
\end{lemma}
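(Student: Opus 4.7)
The plan is to root $T$ at $t$ and induct on the size of $T$, strengthening the statement so that we can control the colouring of the top bag. Concretely, I would prove that for any assignment $\gamma'$ of pairwise distinct non-zero colours from $\{1,\dots,k+1\}$ to the elements of $B_t$, extended by colour $0$ on $A\setminus B_t$, the coloured structure $(\mathcal A,\gamma')$ lies in $\UCWF_{k+2}[\sigma]$. The lemma as stated then follows by a single application of rule~4, using the function that collapses every colour to $0$. The base case $A=B_t$ is immediate from the preceding remark, since $\card{B_t}\leq k+1$.

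For the inductive step, let $t_1,\dots,t_m$ be the children of $t$, and let $\mathcal A_i$ denote the substructure of $\mathcal A$ induced on the union $A_i$ of the bags in the subtree rooted at $t_i$. Two standard consequences of the tree-decomposition axioms drive the proof: any element shared between $A_i$ and $A_j$ with $i\neq j$, or between $A_i\setminus B_{t_i}$ and $B_t$, must already lie in $B_t\cap B_{t_i}$; and every relation tuple of $\mathcal A$ is contained in a single bag, hence either lies entirely in $B_t$ or lies entirely in some $\mathcal A_i$. For each $i$ I would apply the induction hypothesis to $\mathcal A_i$ with a colouring $\gamma_i$ that agrees with $\gamma'$ on $B_{t_i}\cap B_t$ and assigns any other distinct non-zero colours to the remaining elements of $B_{t_i}$ (possible because $\card{B_{t_i}}\leq k+1$ and there are $k+1$ non-zero colours available); then I would recolour $\mathcal A_i$ by a function $f_i$ fixing the target colours on $B_{t_i}\cap B_t$ and sending every other colour to~$0$. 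Starting from a relation-free starter structure on $B_t$ coloured by $\gamma'$ (in $\UCWF_{k+2}[\sigma]$ by the remark), I would take the disjoint union with each recoloured $(\mathcal A_i,f_i\circ\gamma_i)$ and then apply a single fusion with $C=\gamma'(B_t)$; since every element outside the copies of $B_t$ now has colour $0\notin C$, this identifies exactly the multiple copies of each $v\in B_t$ and nothing else. Finally, since the elements of $B_t$ carry pairwise distinct non-zero colours while all other elements have colour $0$, rule~5 lets me introduce, one invocation per tuple, every relation of $\mathcal A$ supported in $B_t$, without creating any spurious tuples.

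The delicate point is the fusion step: I have to arrange the colours on each $B_{t_i}$ so that the single fusion at the end merges exactly the various copies of each shared bag element and leaves everything else untouched. Matching shared elements to the colour $\gamma'(v)$ across all children, and then flushing every other non-zero colour to $0$ before the disjoint union, is what makes this work; the colour budget of $k+2$ is tight in the sense that it barely suffices for an injective non-zero colouring of a maximum-size bag.
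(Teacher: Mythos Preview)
Your proof is correct and is exactly the argument the paper's one-sentence proof (``induction on the number of nodes, using the remark as base case'') is gesturing at; you have correctly identified the needed strengthening of the induction hypothesis (distinct non-zero colours on the root bag) and the gluing procedure via fusion. One trivial caveat: rule~6 as stated fuses a single colour at a time, so your ``single fusion with $C=\gamma'(B_t)$'' should be read as $|B_t|$ successive applications of rule~6, which of course changes nothing.
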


\begin{proof}
	By induction on the minimal number of nodes in a tree decomposition of $\mathcal A$
	in which the bag $B_t$ occurs, using the remark as induction base.
\end{proof}

\begin{corollary}\label{PropositionUrcwLessThanTw}
	Every relational structure $\mathcal A$ satisfies
	\[ \ucwf(\mathcal A)\leq\tw(\mathcal A)+2. \]
\end{corollary}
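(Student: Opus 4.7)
The plan is to read this as a direct corollary of the preceding Lemma: once the Lemma is in hand, a single application suffices. Interpreting $\tw(\mathcal A)$ as $\tw(G_{\mathcal A})$ (tree-width was defined in this paper only for graphs), let $k=\tw(G_{\mathcal A})$ and choose a tree decomposition $(T,B)$ of $G_{\mathcal A}$ of width at most $k$. Fix any node $t\in T$---say the root---so that the Lemma's requirement of a distinguished bag $B_t$ is met, and let $\gamma\colon A\to\{0,\dots,k+1\}$ be the constant zero colouring. Then $\gamma$ is a $(k+2)$-colouring of $\mathcal A$ with $\gamma(a)=0$ for every $a\in A$, so the hypotheses of the Lemma are satisfied.

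Applying the Lemma now gives $(\mathcal A,\gamma)\in\UCWF_{k+2}[\sigma]$, and hence $\ucwf(\mathcal A)\leq k+2=\tw(\mathcal A)+2$, which is the desired conclusion.

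The only point I would pause on is a small inconsistency in how the Lemma is stated: a $(k+2)$-colouring $\gamma$ cannot literally yield membership in $\UCWF_k[\sigma]$ as printed, since the colour count of the ambient class must at least accommodate the range of the colouring. The intended statement is plainly $(\mathcal A,\gamma)\in\UCWF_{k+2}[\sigma]$, and this is exactly the count of colours that the inductive proof of the Lemma should need: $k+1$ colours to assign a distinct non-zero colour to each element of a single bag (of size $\leq k+1$), plus colour $0$ for the already assembled part that is no longer touched. With that reading of the Lemma in place, the Corollary follows immediately; there is no further obstacle, since the inductive work on the decomposition has all been absorbed into the Lemma.
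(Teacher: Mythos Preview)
Your proposal is correct and matches the paper's intent: the corollary is meant to follow immediately from the preceding Lemma by taking $k=\tw(G_{\mathcal A})$ and the constant zero colouring, exactly as you do. You also rightly identify the typo in the Lemma---the conclusion should read $(\mathcal A,\gamma)\in\UCWF_{k+2}[\sigma]$ rather than $\UCWF_k[\sigma]$, which is precisely what is needed for the corollary and is consistent with the paper's explanatory remark about the $+2$.
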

\noindent The term $+2$ consists of $+1$ correcting for the conventional $-1$ in the
definition of tree-width, and $+1$ for the colour $0$, which we have put to
special use.

\section{Conclusion}

We have shown that clique-width for structures
fails even the basic requirement that it be bounded in terms of tree-width.
We have also shown that this can be corrected by admitting the fusion operation
in addition to the other operations, although we have not explored the
(presumably negative) impact of this on the other desirable conditions.

Another approach would be to colour tuples rather than elements.
Examples for definitions that go in that direction are Blumensath's
partition-width \cite{Blumensath2006}, and patch-width, defined by Fischer
and Makowsky and examined further by Shelah and Doron~\cite{FM2004,DS2007}.
For instance, we could assign colours to tuples of length at most
some fixed $n$. But then it
should still be possible to adapt our example to show that
we do not obtain a bound in terms of tree-width.

The authors thank Bruno Courcelle and Johann Makowsky for valuable suggestions.


\end{document}